\documentclass[12pt]{iopart}
\pdfoutput=1
\usepackage{iopams}

\usepackage{physics}
\usepackage{bm}
\usepackage{comment}
\usepackage{amsmath, amsfonts, amsthm, amssymb, amscd}
\usepackage[all]{xy}
\usepackage[dvipdfmx]{graphicx}
\usepackage{pdfpages}
\usepackage{here}
\usepackage{mathdots}
\usepackage{ytableau}
\usepackage{threeparttable}
\usepackage{color}
\usepackage{caption}
\captionsetup[table]{labelsep=period, labelfont=bf, justification=raggedright, singlelinecheck=off}

\theoremstyle{plain}
\newtheorem{thm}{Theorem}[section]
\newtheorem{lem}[thm]{Lemma}
\newtheorem{cor}[thm]{Corollary}

\theoremstyle{definition}
\newtheorem{dfn}[thm]{Definition}

\theoremstyle{remark}
\newtheorem{rem}[thm]{Remark}

\begin{document}

\title[]{Jordan Decomposition of Non-Hermitian Fermionic Quadratic Forms}

\author{Shunta Kitahama${}^1$, Hironobu Yoshida${}^1$, Ryo Toyota${}^2$, and Hosho Katsura${}^{1,3,4}$}
\address{\it $^1$Department of Physics, University of Tokyo, 7-3-1 Hongo, Bunkyo-Ku, Tokyo 113-0033, Japan}
\address{\it $^2$Department of Mathematics, Texas A {\rm \&} M University, College Station, TX 77843}
\address{\it $^3$Institute for Physics of Intelligence, University of Tokyo, 7-3-1 Hongo, Bunkyo-ku, Tokyo 113-0033, Japan}
\address{\it $^4$Trans-scale Quantum Science Institute, University of Tokyo, Bunkyo-ku, Tokyo 113-0033, Japan}

\begin{abstract}
We give a rigorous proof of Conjecture 3.1 by Prosen \cite{prosen2010spectral} on the nilpotent part of the Jordan decomposition of a quadratic fermionic Liouvillian. 
We also show that the number of the Jordan blocks of each size can be expressed in terms of the coefficients of a polynomial called the $q$-binomial coefficient and describe the procedure to obtain the Jordan canonical form of the nilpotent part. 
\end{abstract}

\section{Introduction}
Fermionic quadratic forms appear in various fields of physics. 
For example, the BCS mean-field Hamiltonian for superconductivity~\cite{BCS1957, altland1997nonstandard} 
can be expressed as a quadratic form of creation and annihilation operators for electrons. 
Less obvious examples include the quantum Ising and XY chains that can be mapped to free fermions via the Jordan-Wigner transformation \cite{lieb1961two, schultz1964two}. 
Furthermore, it is known that the Kitaev honeycomb model~\cite{kitaev2006} can be mapped to a model of free fermions coupled to a static gauge field by rewriting the spin operators in terms of Majorana fermions. 
In general, if the Hamiltonian is quadratic in fermion operators, each many-body eigenstate is built out of the single-particle eigenstates. 
However, the correspondence between single-particle and many-body eigenstates can be highly 
nontrivial for non-Hermitian quadratic Hamiltonians, 
which appear in various situations, such as an effective description of dissipative systems~\cite{prosen2010spectral, prosen2008third, prosen2008quantum, prosen_zunkovic, horstmann2013, guo2017, shibata2019dissipative, shibata2020ptep, ashida2020NH, vernier2020mixing, barthel2022solving, yamanaka2023}. 

One difficulty that arises when treating non-Hermitian systems is that there may be exceptional points where the Hamiltonian becomes non-diagonalizable~\cite{ashida2020NH,Chruciski_2017}.
At an exceptional point, (proper) eigenstates do not span the entire Hilbert space. 
For this reason, we need to consider generalized eigenstates. 
Let $H$ be a non-Hermitian Hamiltonian acting on a finite-dimensional Hilbert space. 
A state $\ket{\psi}$ is a generalized (right) eigenstate of rank $d$ of $H$ with eigenvalue $E$ if $\ket{\psi}\in\ker (H-E)^d\setminus\ker(H-E)^{d-1}$.
In particular, we call a generalized eigenstate of rank one a proper eigenstate.

Exceptional points are also ubiquitous in dissipative systems~\cite{hatano2019exceptional}. Here we focus on the time evolution of the density matrix $\rho$ of an open quantum system governed by the Gorini-Kossakowski-Sudarshan-Lindblad (GKSL) equation~\cite{lindblad1976,gorini1976}: 
\begin{equation}
\frac{\dd\rho }{\dd t} = \hat{\mathcal{L}}[\rho] :=
-{\rm i} [H,\rho] + \sum_{\mu} \left( 2L_\mu \rho L_\mu^\dagger - 
\{L_\mu^\dagger L_\mu,\rho\} \right),
\label{eq:lind}
\end{equation}
where $H$ is the system Hamiltonian and $L_\mu$ are Lindblad operators describing the coupling to the environment. The superoperator $\hat{\mathcal{L}}$ is called the Liouvillian and can be thought of as an effective non-Hermitian Hamiltonian acting on the Liouville-Fock space, i.e. the Fock space of density operators. The Liouvillian $\hat{\mathcal{L}}$ may exhibit exceptional points with nontrivial Jordan blocks. In Ref.~\cite{prosen2010spectral}, Prosen studied the Jordan decomposition of a quadratic many-body Liouvillian in which $H$ and $L_\mu$ are quadratic and linear in Majorana fermion operators, respectively. 
Here we briefly review his results.

Consider the GKSL equation (\ref{eq:lind}) with 
\begin{align}
H &= \sum_{j,k=1}^{2n} w_j H_{jk} w_k,
\label{eq:ham} \\
L_\mu &= \sum_{j=1}^{2n} l_{\mu,j} w_j,
\label{eq:lin}
\end{align}
where, $w_j$, $w_k$ ($j,k=1, \ldots, 2n$) are Hermitian Majorana fermion operators satisfying the anticommutation relations
\begin{align}
\{w_j,w_k\} = 2\delta_{j,k}. 
\end{align}
Using the third quantization formalism \cite{prosen2008third}, one can rewrite the Liouvillian in the following Jordan canonical form:
\begin{align}
\hat{\mathcal{L} }= -2\sum_j \qty[\beta_j \sum_{m=1}^{\ell_j}\hat{b}_{j,m}' \hat{b}_{j,m}+\sum_{m=1}^{\ell_j-1}\hat{b}_{j,m+1}' \hat{b}_{j,m}].
\label{eq:dLL}
\end{align}
Here, $j$ is the index of the Jordan block of size $\ell_j$, $\hat{b}_{j,m}$ and $\hat{b}_{j,m}'$ are ``complex fermion operators'' that satisfy
\begin{align}\label{anticom}
\{\hat{b}_{j,m},\hat{b}_{k,n}\}=\{\hat{b}'_{j,m},\hat{b}'_{k,n}\}=0,\quad \{\hat{b}_{j,m},\hat{b}'_{k,n}\}=\delta_{j,k}\delta_{m,n},
\end{align}
and $\beta_j$, so-called rapidities, are given by the eigenvalues of the matrix whose $(i,j)$ element is $-2{\rm i} H_{ij}+2\Re\sum_{\mu} l_{\mu,i}^* l_{\mu,j}$. 
For the relation between $w_j$ and the complex fermion operators $\hat{b}_{j,m}$ and $\hat{b}_{j,m}'$, we refer the reader to the original papers by Prosen \cite{prosen2010spectral,prosen2008third}. 
Note that $\hat{b}_{j,m}'$ is not necessarily a Hermitian conjugate of $\hat{b}_{j,m}$.
We emphasize that all the arguments made in this paper do not require $\hat{b}_{j,m}'$ to be the Hermitian conjugate of $\hat{b}_{j,m}$.
The first term in the bracket of equation~(\ref{eq:dLL}) is diagonal in the complex fermion basis, whereas the second term is nilpotent. 
The Liouvillian spectrum is determined only by the first term, regardless of the addition of the second term.
In the absence of nilpotent terms, each Fock state is a proper eigenstate of $\hat{\mathcal{L}}$. 
On the other hand, the presence of the nilpotent terms makes the structure of the eigenspaces nontrivial. 
Note that nilpotent terms may also appear in the Jordan canonical form of the Liouvillian when $H_{jk}$ in equation~(\ref{eq:ham}) is a non-Hermitian matrix. 
In other words, this nilpotent term is the most general form of the non-diagonalizable part of the non-Hermitian fermionic quadratic form. 

To clarify the meaning of the second term in equation~(\ref{eq:dLL}),
let us focus on the $j$-th Jordan block. 
For notational simplicity, we suppress the subscript $j$. With this notation, the anticommutation relations in equation~(\ref{anticom}) become
\begin{align}
\{\hat{b}_{m},\hat{b}_{n}\}=\{\hat{b}'_{m},\hat{b}'_{n}\}=0,\quad \{\hat{b}_{m},\hat{b}'_{n}\}=\delta_{m,n},
\label{eq:anticom2}
\end{align}
and the nilpotent shift term in equation~(\ref{eq:dLL}) can be expressed as
\begin{equation}
\hat{\mathcal{M}}_\ell:= \sum_{k=1}^{\ell-1} \hat{b}_{k+1}' \hat{b}_{k}.
\label{eq:nilpotent}
\end{equation}
Here $\ell$ is the size of the Jordan block and can be regarded as the number of sites in a fictitious system with a one-dimensional array of $\ell$ sites, labeled by $k=1,2,...,\ell$ from left to right. 
We note in passing that $\hat{\mathcal{M}}_\ell$ can be thought of as the Hamiltonian of the uni-directional Hatano-Nelson model with open boundary conditions \cite{hatano-nelson} by identifying $\hat{b}_{k}'$ and $\hat{b}_{k}$ with the standard creation and annihilation operators for fermions.

Now, let us define the Fock basis vectors by
\begin{align}
\ket{\nu_1,\ldots,\nu_{\ell}}&:={(\hat{b}_1')}^{\nu_1} \cdots {(\hat{b}_\ell')}^{\nu_\ell}\ket{{\rm right}},\\
\bra{\nu_1,\ldots,\nu_{\ell}}&:=\bra{{\rm left}}{(\hat{b}_\ell)}^{\nu_\ell} \cdots {(\hat{b}_1)}^{\nu_1},\qquad \nu_k\in\{0,1\}. 
\end{align}
Here, $\ket{{\rm right}}$ and $\bra{{\rm left}}$ are dual of each other (i.e. $\braket{{\rm left}}{{\rm right}}=1$) and satisfy $\hat{b}_k\ket{{\rm right}}=0$ and $\bra{{\rm left}}\hat{b}_k'=0$ for any $k$.
Since $\hat{\mathcal{M}}_\ell$ commutes with the particle number operator $\hat{\mathcal{N}}_\ell:=\sum_{k=1}^{\ell}\hat{b}_k'\hat{b}_k$, we can discuss each particle number sector separately.
If $\hat{\mathcal{M}}_\ell$ operates on a Fock basis vector, we get a linear combination of all the possible states realized by moving one of the particles that can move to the right neighbor, all with the same coefficient. 
Also, we define $\hat{\Omega}_{\ell}$ to be the ``weight'' operator as follows:
\begin{align}
    \hat{\Omega}_{\ell}&:=\displaystyle\sum_{k=1}^\ell k \hat{b}'_k \hat{b}_k-\frac{1}{2}\hat{\mathcal{N}}_\ell(\hat{\mathcal{N}}_\ell+1).
\end{align}
Each Fock basis vector is an eigenstate of ${\hat \Omega}_\ell$, and the corresponding eigenvalue counts the number of times one must apply $\hat{\mathcal{M}}_\ell$ to the lowest weight state $\ket{\rm min}:=\ket{1,\dots,1,0,\dots,0}$ to reach this state. 
Let us define $\mathcal{V}_{\ell,m}$ [resp. $\mathcal{V}_{\ell,m}^r$] to be the subspace spanned by the entire states of $m$ particles [resp. $m$ particles with weight $r$]: 
\begin{align}
\mathcal{V}_{\ell,m}&:=
\{\ket{\psi}:\hat{\mathcal{N}}_{\ell}\ket{\psi}=m\ket{\psi}\},\\
\mathcal{V}_{\ell,m}^r&:=
\{\ket{\psi}:\hat{\mathcal{N}}_{\ell}\ket{\psi}=m\ket{\psi}\mathrm{and\ }\hat{\Omega}_{\ell}\ket{\psi}=r\ket{\psi}\}.
\end{align}
Note that $\mathcal{V}_{\ell,m}=\bigoplus_{r=0}^{m(\ell-m)} \mathcal{V}_{\ell,m}^r$. For convenience, let $\hat{\mathcal{M}}_{\ell,m}$ [resp. $\hat{\mathcal{M}}_{\ell,m}^r$] denote the restriction of $\hat{\mathcal{M}}_{\ell}$ to the subspace $\mathcal{V}_{\ell,m}$ [resp. $\mathcal{V}_{\ell,m}^r$]. 

To see that $\hat{\mathcal{M}}_{\ell,m}$ has a nontrivial Jordan structure, let us consider its action on the Fock basis vectors with fixed particle number $m$. Obviously, the lowest weight state is $\ket{\rm min}\in  \mathcal{V}_{\ell,m}^0$.
Operating $(\hat{\mathcal{M}}_{\ell,m})^{m(\ell-m)}$ on $\ket{\mathrm{min}}$ leads to a scalar multiple of $\ket{\mathrm{max}}:=\ket{0,\dots,0,1,\dots,1} \in  \mathcal{V}_{\ell,m}^{m(\ell-m)}$. 
When $\hat{\mathcal{M}}_{\ell,m}$ is applied to $\ket{\mathrm{max}}$ once more, it becomes 0. From this, it is easy to see that $\ket{\mathrm{max}}$ is a proper eigenstate and $\ket{\mathrm{min}}$ is a generalized eigenstate of rank $m(\ell-m)+1$. But $\ket{\mathrm{max}}$ is not the only proper eigenstate of $\hat{\mathcal{M}}_{\ell,m}$ in $\mathcal{V}_{\ell,m}$. 
Let us illustrate this with an example. Consider the case with $\ell=4$ sites and $m=2$ particles. We see that $\hat{\mathcal{M}}_{4,2}$ acts as
\begin{align}
\hat{\mathcal{M}}_{4,2}\ket{1,0,1,0} =\ket{1,0,0,1}+\ket{0,1,1,0},\\
\hat{\mathcal{M}}_{4,2}\ket{1,0,0,1} =\hat{\mathcal{M}}_{4,2}\ket{0,1,1,0}=\ket{0,1,0,1}.
\end{align}
Therefore, by defining $\ket{\pm}:=\ket{1,0,0,1}\pm\ket{0,1,1,0}$, we find that $\ket{+}$ is a generalized eigenstate of rank three, while $\ket{-}$ is a proper eigenstate.
In the same way, there are generally other proper eigenstates than $\ket{\mathrm{max}}$. 
Prosen's conjecture in~\cite{prosen2010spectral} predicts that the total number of proper eigenstates of $\hat{\mathcal{M}}_{\ell,m}$ is equal to the dimension of the subspace $\mathcal{V}_{\ell,m}^r$ with $r=\lfloor m (\ell-m)/2 \rfloor$. 
In this paper, we prove this claim rigorously. 
Furthermore, we extend this result to generalized eigenstates of $\hat{\mathcal{M}}_{\ell,m}$ by proving that $\dim\ker (\hat{\mathcal{M}}_{\ell,m})^d-\dim\ker (\hat{\mathcal{M}}_{\ell,m})^{d-1}$ is the $d$-th largest $\dim\mathcal{V}_{\ell,m}^r$ (counted with multiplicity).

This paper is organized as follows.
In Section \ref{sec:main theorem}, we show that $\hat{\mathcal{M}}_\ell$ can be regarded as a generator of $\mathfrak{sl}_2$ algebra. 
Then, using the representation theory of $\mathfrak{sl}_2$, we prove the conjecture by Prosen (Conjecture 3.1 in \cite{prosen2010spectral}). We also show that the dimensions of $\ker (\hat{\mathcal{M}}_{\ell,m})^d$ can be expressed in terms of the dimensions of subspaces $\mathcal{V}_{\ell,m}^r$. 
In Section \ref{sec:q-binomial}, we show that $\dim \mathcal{V}_{\ell,m}^r$ can be obtained as coefficients of a polynomial called the $q$-binomial coefficient, which allows us to determine the number and sizes of the Jordan blocks of $\hat{\mathcal{M}}_{\ell,m}$ explicitly. We further describe the procedure to obtain the Jordan canonical form of $\hat{\mathcal{M}}_{\ell,m}$ for general $\ell$ and $m$. 

\section{Main Theorems}
\label{sec:main theorem}
In this section, we prove Theorem \ref{prosen_conjecture} conjectured by Prosen (Conjecture 3.1 in \cite{prosen2010spectral}).
After that, we prove Theorem \ref{gen_dim} about the dimensions of
$\ker(\hat{\mathcal{M}}_{\ell,m})^d$.

\begin{thm}\label{prosen_conjecture}
The map $\hat{\mathcal{M}}_{\ell,m}^r: \mathcal{V}_{\ell,m}^r \to \mathcal{V}_{\ell,m}^{r+1}$ is injective for $r\leq\left\lfloor\frac{m(\ell-m)-1}{2}\right\rfloor$, and surjective for $r\geq\left\lfloor\frac{m(\ell-m)}{2}\right\rfloor$.
\end{thm}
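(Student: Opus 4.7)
\emph{Proof plan.}
The plan is to realize $\hat{\mathcal{M}}_\ell$ as the raising generator of an $\mathfrak{sl}_2$ action on the Fock space, so that Theorem~\ref{prosen_conjecture} reduces to a standard weight-space statement. Concretely, I introduce
\begin{equation*}
E := \hat{\mathcal{M}}_\ell = \sum_{k=1}^{\ell-1}\hat{b}'_{k+1}\hat{b}_k, \quad F := \sum_{k=1}^{\ell-1} k(\ell-k)\,\hat{b}'_k\hat{b}_{k+1}, \quad H := \sum_{k=1}^{\ell}(2k-\ell-1)\,\hat{b}'_k\hat{b}_k,
\end{equation*}
and verify using~(\ref{eq:anticom2}) that $[H,E]=2E$, $[H,F]=-2F$, $[E,F]=H$. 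A short computation gives $H = 2\hat{\Omega}_\ell + \hat{\mathcal{N}}_\ell(\hat{\mathcal{N}}_\ell-\ell)$, so $H$ acts on $\mathcal{V}_{\ell,m}^r$ as the scalar $\mu_r := 2r - m(\ell-m)$. All three operators commute with $\hat{\mathcal{N}}_\ell$, so they preserve each $\mathcal{V}_{\ell,m}$, making it a finite-dimensional $\mathfrak{sl}_2$-module whose $H$-weight-space decomposition is exactly $\mathcal{V}_{\ell,m} = \bigoplus_r \mathcal{V}_{\ell,m}^r$.

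By complete reducibility, $\mathcal{V}_{\ell,m}$ is a direct sum of simple modules $V(\lambda)$ with nonnegative integer highest weights $\lambda$. On each summand, $E$ restricts to a bijection $V(\lambda)_\mu \to V(\lambda)_{\mu+2}$ for $-\lambda \leq \mu \leq \lambda-2$, has nontrivial kernel only when $\mu=\lambda$, and fails to cover its target only when $\mu+2=-\lambda$. Assembled block by block, this shows that $\hat{\mathcal{M}}_{\ell,m}^r$ fails to be injective only if some summand has highest weight $\mu_r$, and fails to be surjective only if some summand has highest weight $-\mu_r-2$. Because highest weights are nonnegative, the first obstruction requires $\mu_r \geq 0$ and the second requires $\mu_r \leq -2$; hence $\hat{\mathcal{M}}_{\ell,m}^r$ is automatically injective whenever $\mu_r \leq -1$ and automatically surjective whenever $\mu_r \geq -1$. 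Translating $\mu_r = 2r-m(\ell-m)$ back to $r$ yields exactly the two bounds $r \leq \lfloor(m(\ell-m)-1)/2\rfloor$ and $r \geq \lfloor m(\ell-m)/2\rfloor$ in the statement.

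The main piece of actual work is producing the $\mathfrak{sl}_2$-triple, in particular the coefficients $k(\ell-k)$ in $F$, which are not obvious a priori but are forced: under the ansatz $F = \sum_k c_k \hat{b}'_k \hat{b}_{k+1}$, the relation $[E,F]=H$ becomes a telescoping recurrence $c_{k-1}-c_k = 2k-\ell-1$ with boundary values $c_1 = c_{\ell-1} = \ell-1$, whose unique solution is $c_k = k(\ell-k)$; the remaining relations $[H,E]=2E$ and $[H,F]=-2F$ follow from analogous direct commutator computations. Apart from this bookkeeping, the argument is purely representation-theoretic, and I do not anticipate any further obstacles.
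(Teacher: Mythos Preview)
Your proposal is correct and follows essentially the same route as the paper: the same $\mathfrak{sl}_2$-triple $(E,H,F)$ with the coefficients $k(\ell-k)$ in $F$, the identification of $H$-weight spaces with the $\mathcal{V}_{\ell,m}^r$, complete reducibility, and a block-by-block analysis of $E$ on each irreducible summand. The only cosmetic difference is that the paper argues the final inequalities via the parity of $\dim V_j$ (odd or even according to $m(\ell-m)$), whereas you use the equivalent and slightly slicker observation that highest weights are nonnegative, so the injectivity obstruction $\lambda=\mu_r$ forces $\mu_r\ge 0$ and the surjectivity obstruction $\lambda=-\mu_r-2$ forces $\mu_r\le -2$.
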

 
To prove this theorem, we first show in Lemma \ref{lem:1} that $\hat{\mathcal{M}}_\ell$ can be regarded as one of the generators of the Lie algebra $\mathfrak{sl}_2$. 
Then we prove Theorem \ref{prosen_conjecture} by using well-known facts about the representation theory of $\mathfrak{sl}_2$ (See Lemmas \ref{lem:2} and \ref{lem:3}).

\begin{lem}\label{lem:1}
Let $\displaystyle\hat{\mathcal{M}}'_\ell=\sum_{k=1}^{\ell-1} k(\ell-k)\ \hat{b}_k'\hat{b}_{k+1}$ and $\hat{\mathcal{Z}}_\ell=\displaystyle\sum_{k=1}^\ell (2k-\ell-1)\hat{b}_k'\hat{b}_k$. 
Then, ${\{\hat{\mathcal{M}}_\ell,\hat{\mathcal{Z}}_\ell, \hat{\mathcal{M}}_\ell'\}}$ is an $\mathfrak{sl}_2$-triple which satisfies the following commutation relations:
\begin{align}
[\hat{\mathcal{Z}}_\ell,\hat{\mathcal{M}}_\ell]=2\hat{\mathcal{M}}_\ell,\quad[\hat{\mathcal{Z}}_\ell,\hat{\mathcal{M}}'_\ell]=-2\hat{\mathcal{M}}'_\ell,\quad
[\hat{\mathcal{M}}_\ell,\hat{\mathcal{M}}'_\ell]=\hat{\mathcal{Z}}_\ell.
\label{sl2triple}
\end{align}
\end{lem}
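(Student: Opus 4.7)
The proof is a direct calculation using the standard commutator identity for fermion bilinears. First I would record the lemma
\begin{equation}
[\hat{b}'_i \hat{b}_j, \hat{b}'_k \hat{b}_l] = \delta_{jk}\, \hat{b}'_i \hat{b}_l - \delta_{il}\, \hat{b}'_k \hat{b}_j,
\end{equation}
which follows from (\ref{eq:anticom2}) by a one-line computation (the quartic terms cancel because $\hat{b}'_i \hat{b}'_k = -\hat{b}'_k \hat{b}'_i$ and $\hat{b}_j \hat{b}_l = -\hat{b}_l \hat{b}_j$). All three commutation relations then reduce to manipulations of sums of number operators $\hat{n}_k := \hat{b}'_k \hat{b}_k$ and hopping terms $\hat{b}'_{k+1} \hat{b}_k$, $\hat{b}'_k \hat{b}_{k+1}$.

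For $[\hat{\mathcal{Z}}_\ell,\hat{\mathcal{M}}_\ell]$ the bilinear identity yields a single sum of the form $\sum_k \bigl[(2(k+1)-\ell-1) - (2k-\ell-1)\bigr]\, \hat{b}'_{k+1}\hat{b}_k$, and the bracket collapses to $2$, giving $2\hat{\mathcal{M}}_\ell$. The computation of $[\hat{\mathcal{Z}}_\ell,\hat{\mathcal{M}}'_\ell]$ is identical in spirit and produces $-2\hat{\mathcal{M}}'_\ell$. These two relations essentially just record the fact that $\hat{\mathcal{M}}_\ell$ and $\hat{\mathcal{M}}'_\ell$ raise and lower the weight $\hat{\Omega}_\ell$ (equivalently the eigenvalue of $\hat{\mathcal{Z}}_\ell$, up to normalization) by one.

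The main point of the lemma, and the step where the specific coefficients $k(\ell-k)$ matter, is $[\hat{\mathcal{M}}_\ell,\hat{\mathcal{M}}'_\ell]=\hat{\mathcal{Z}}_\ell$. Applying the bilinear identity gives
\begin{equation}
[\hat{\mathcal{M}}_\ell, \hat{\mathcal{M}}'_\ell] = \sum_{j=1}^{\ell-1} j(\ell-j)\,(\hat{n}_{j+1}-\hat{n}_j).
\end{equation}
I would then rearrange by shifting the index in the first piece so that the coefficient of $\hat{n}_j$ becomes $(j-1)(\ell-j+1) - j(\ell-j)$ for interior $j$, and handle the endpoints $j=1$ and $j=\ell$ separately. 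A short algebraic simplification shows
\begin{equation}
(j-1)(\ell-j+1) - j(\ell-j) = 2j-\ell-1,
\end{equation}
which (together with the matching boundary values $1-\ell$ at $j=1$ and $\ell-1$ at $j=\ell$) gives exactly $\hat{\mathcal{Z}}_\ell=\sum_{j=1}^{\ell}(2j-\ell-1)\hat{n}_j$.

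The only subtlety is choosing the coefficients in $\hat{\mathcal{M}}'_\ell$ correctly so that the telescoping sum lands on the linear weight $2j-\ell-1$ rather than some unrelated function of $j$; once the bilinear commutator identity is in hand, the verification itself is routine arithmetic with no representation-theoretic input required.
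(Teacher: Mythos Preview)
Your proof is correct and follows essentially the same route as the paper's own argument: both compute $[\hat{\mathcal{M}}_\ell,\hat{\mathcal{M}}'_\ell]$ directly using the bilinear commutator identity $[\hat{b}'_i\hat{b}_j,\hat{b}'_k\hat{b}_l]=\delta_{jk}\hat{b}'_i\hat{b}_l-\delta_{il}\hat{b}'_k\hat{b}_j$, collapse the double sum via the Kronecker deltas to $\sum_{k}k(\ell-k)(\hat{n}_{k+1}-\hat{n}_k)$, and then shift indices to read off the coefficient $2k-\ell-1$ of $\hat{n}_k$. The only difference is presentational: you state the bilinear identity explicitly and spell out the endpoint checks, while the paper leaves these implicit and simply says the other two relations follow similarly.
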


\begin{proof}
The last relation in equation~(\ref{sl2triple}) can be shown by a straightforward calculation:
\begin{align}
[\hat{\mathcal{M}}_\ell,\hat{\mathcal{M}}'_\ell]&=\sum_{k=1}^{\ell-1}
\sum_{k'=1}^{\ell-1} k'(\ell-k')[\hat{b}_{k+1}'\hat{b}_{k}, 
\hat{b}_{k'}'\hat{b}_{k'+1}] \nonumber \\
&=\sum_{k=1}^{\ell-1}
\sum_{k'=1}^{\ell-1} k'(\ell-k')(\delta_{k,k'}\hat{b}_{k+1}'\hat{b}_{k'+1}-\delta_{k,k'}\hat{b}_{k'}'\hat{b}_{k}) \nonumber \\
&=\sum_{k=1}^{\ell}
(2k-\ell-1)\hat{b}_k'\hat{b}_k=\hat{\mathcal{Z}}_\ell,
\end{align}
where we have used equation~(\ref{eq:anticom2}). The other two commutation relations can be shown similarly.
\end{proof}

A few remarks are in order. 
\begin{rem}
    Lemma \ref{lem:1} can be extended to the case with spatially varying couplings. In this case, the nilpotent shift operator is given by 
    \begin{equation}
        \hat{\mathcal{M}}_\ell(\{c_k\})= \sum_{k=1}^{\ell-1} c_k\ \hat{b}_{k+1}' \hat{b}_{k},
        \label{eq:difweight}
    \end{equation}
    where $c_k\in\mathbb{C}\setminus\{0\}$ can even be random. For this operator, the definition of $\hat{\mathcal{M}}'_\ell$ in Lemma \ref{lem:1} is modified to
    \begin{equation}
    \displaystyle\hat{\mathcal{M}}'_\ell(\{c_k\})=\sum_{k=1}^{\ell-1} \frac{k(\ell-k)}{c_k}\ \hat{b}_k'\hat{b}_{k+1}
    \end{equation}
    so that $\hat{\mathcal{M}}_\ell(\{c_k\})$, $\hat{\mathcal{Z}}_\ell$, and $\hat{\mathcal{M}}'_\ell(\{c_k\})$ satisfy the commutation relations of the $\mathfrak{sl}_2$-triple.   
    With this modification, the proof of Theorem \ref{prosen_conjecture} based on the representation theory of $\mathfrak{sl}_2$ also applies to the non-Hermitian Hamiltonian (\ref{eq:difweight}).
\end{rem}

\begin{rem}
A similar argument can be applied to construct an $\mathfrak{sl}_2$-triple and determine the Jordan canonical form for the bosonic uni-directional Hatano-Nelson model with open boundary conditions. This will be discussed elsewhere~\cite{kitahamaprep}.
\end{rem}

For convenience, we write the restrictions of $\hat{\mathcal{M}}'_{\ell}$, $\hat{\mathcal{Z}}_{\ell}$ and $\hat{\Omega}_{\ell}$ to $\mathcal{V}_{\ell,m}$ as $\hat{\mathcal{M}}'_{\ell,m}$, $\hat{\mathcal{Z}}_{\ell,m}$ and $\hat{\Omega}_{\ell,m}$, respectively. From Lemma \ref{lem:1}, the triple ${\{\hat{\mathcal{M}}_{\ell,m}, \hat{\mathcal{Z}}_{\ell,m},\hat{\mathcal{M}}_{\ell,m}'\}}$ is a set of generators of $\mathfrak{sl}_2$ on $\mathcal{V}_{\ell,m}$. Then, it is known that any finite-dimensional representation of $\mathfrak{sl}_2$ can always be decomposed into a direct sum of its irreducible representations (Section 6.3 of \cite{humphreys2012introduction}).

\begin{lem}\label{lem:2}
    A finite-dimensional representation of $\mathfrak{sl}_2$ is completely reducible.
\end{lem}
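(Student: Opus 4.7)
The plan is to establish Weyl's complete reducibility theorem for $\mathfrak{sl}_2$ via the Casimir element, which is the standard route in Lie representation theory. First I would reformulate complete reducibility as the statement that every invariant subspace $W \subset V$ of a finite-dimensional representation admits an $\mathfrak{sl}_2$-invariant complement. By induction on $\dim V$, it suffices to handle the case where $W$ is irreducible; and by a further standard reduction (identifying splittings of $W \hookrightarrow V$ with invariant vectors in an auxiliary quotient of $\mathrm{Hom}(V,W)$), one further reduces to short exact sequences with trivial quotient, $0 \to W \to V \to \mathbb{C} \to 0$.

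Next, I would introduce the Casimir element
\[
\hat{C} \;:=\; \hat{\mathcal{M}}'_\ell \hat{\mathcal{M}}_\ell + \hat{\mathcal{M}}_\ell \hat{\mathcal{M}}'_\ell + \tfrac{1}{2}\hat{\mathcal{Z}}_\ell^{\,2},
\]
and verify directly from the commutation relations (\ref{sl2triple}) that $\hat{C}$ commutes with each of $\hat{\mathcal{M}}_\ell$, $\hat{\mathcal{M}}'_\ell$, and $\hat{\mathcal{Z}}_\ell$. Consequently $\hat{C}$ preserves every invariant subspace and, by Schur's lemma, acts as a scalar on each irreducible representation; evaluating on a highest-weight vector shows that on the $(n+1)$-dimensional irreducible representation the scalar equals $\tfrac{n(n+2)}{2}$, which vanishes precisely for $n=0$ (the trivial representation) and is distinct for distinct $n$.

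The splitting then goes as follows. In the reduced setting $0 \to W \to V \to \mathbb{C} \to 0$ with $W$ irreducible and non-trivial, $\hat{C}$ acts on $W$ by a non-zero scalar and on $V/W$ by zero, so $\ker \hat{C}|_V$ is a one-dimensional $\mathfrak{sl}_2$-invariant subspace mapping isomorphically onto $V/W$ --- the desired complement. The leftover case $W \cong \mathbb{C}$ must be handled separately: since $\mathfrak{sl}_2$ is perfect, i.e.\ $\mathfrak{sl}_2 = [\mathfrak{sl}_2,\mathfrak{sl}_2]$, and every element of $\mathfrak{sl}_2$ acts on $V$ as a strictly upper-triangular $2\times 2$ matrix (so that any product of two such vanishes), the commutator subalgebra acts as zero, forcing the full action to be trivial, whence $V \cong W \oplus \mathbb{C}$ splits automatically.

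The main obstacle is not conceptual but organizational: the two-step reduction to the trivial-quotient case requires careful bookkeeping with $\mathrm{Hom}$-spaces, and the trivial-over-trivial sub-case must be handled by hand precisely because the Casimir trick degenerates there ($\hat{C}=0$ on both $W$ and $V/W$). Once those reductions are in place, the final eigenspace argument using the centrality of $\hat{C}$ is immediate.
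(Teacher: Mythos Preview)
The paper does not actually prove this lemma: it records it as a well-known fact and cites Section~6.3 of Humphreys, \emph{Introduction to Lie Algebras and Representation Theory}. What you have outlined is precisely the Casimir-based proof of Weyl's theorem given in that reference, so you are supplying the argument the paper deliberately delegates to the literature.

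Your sketch is correct. One cosmetic point worth fixing: since Lemma~\ref{lem:2} concerns arbitrary finite-dimensional $\mathfrak{sl}_2$-modules, the Casimir should be written abstractly as $C = ef + fe + \tfrac{1}{2}h^2$ in the universal enveloping algebra, not in terms of the concrete operators $\hat{\mathcal{M}}_\ell$, $\hat{\mathcal{M}}'_\ell$, $\hat{\mathcal{Z}}_\ell$ from Lemma~\ref{lem:1}. Those operators act only on the specific Fock-space module $\mathcal{V}_{\ell,m}$, whereas your reduction argument requires evaluating $C$ on auxiliary modules such as subquotients of $\mathrm{Hom}(V,W)$, which are not of that form. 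With that notational adjustment the argument goes through exactly as you describe.
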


Furthermore, an $n$-dimensional irreducible representation of $\mathfrak{sl}_2$ is unique up to isomorphism for each dimension $n$, and satisfies the following properties (Section 6.1 of \cite{kosmann2010groups}).

\begin{lem}\label{lem:3}
    Let $\{e,h,f\}$ be an $\mathfrak{sl}_2$-triple satisfying
    \begin{align}
        [h,e]=2e, \quad [h,f]=-2f,\quad [e,f]=h,
    \end{align}
    and $(\rho,V)$ be a $2n+1$ dimensional irreducible representation of $\mathfrak{sl}_2$ for $n \in \frac{1}{2}\mathbb{Z}$.
    Then, the highest weight of $(\rho,V)$ is $2n$, i.e.
    \begin{align}
        2n=\max\{\lambda : \lambda\text{ is an eigenvalue of } \rho(h)\}
    \end{align}
    and there exists a basis $\{v_{-n},v_{-n+1},\ldots ,v_{n}\}$ of $V$ such that
    \begin{align}
        \rho(h)v_k=2kv_k,\quad\rho(f)v_k=v_{k-1},\quad \rho(e)v_k=a_kv_{k+1},
    \end{align}
    where $a_k=n(n+1)-k(k+1)$ for $k=-n,-n+1,\ldots, n$ and $v_{-n-1}=v_{n+1}=0$. 
\end{lem}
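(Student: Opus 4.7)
My plan is to carry out the standard highest-weight construction for $\mathfrak{sl}_2$. Since $V$ is finite-dimensional over $\mathbb{C}$, $\rho(h)$ has at least one eigenvalue; I fix an eigenvalue $\lambda$ of maximal real part and choose a nonzero vector $v\in V$ with $\rho(h)v=\lambda v$. The bracket $[h,e]=2e$ shows that $\rho(e)v$ is either zero or an eigenvector of $\rho(h)$ with eigenvalue $\lambda+2$; by maximality of $\lambda$ it must vanish, so $v$ is a highest-weight vector.

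Next I define a descending chain $w_k:=\rho(f)^k v$ for $k=0,1,2,\ldots$, with $w_0=v$. Iterated use of $[h,f]=-2f$ gives $\rho(h)w_k=(\lambda-2k)w_k$, so the nonzero $w_k$ have pairwise distinct weights and are therefore linearly independent. The key technical step is an induction on $k$ using $[e,f]=h$ to prove
\begin{equation}
\rho(e)w_k=k(\lambda-k+1)\,w_{k-1}.
\end{equation}
Because $\dim V<\infty$, there is a smallest $N\geq 1$ with $w_N=0$ while $w_{N-1}\neq 0$. Applying the formula above at $k=N$ yields $N(\lambda-N+1)=0$, hence $\lambda=N-1$. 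The span $U=\langle w_0,w_1,\ldots,w_{N-1}\rangle$ is stable under $\rho(e),\rho(f),\rho(h)$, so irreducibility forces $U=V$ and $N=\dim V=2n+1$, giving $\lambda=2n$ as asserted.

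To match the normalization in the statement I reindex by setting $v_{n-k}:=w_k$ for $k=0,1,\ldots,2n$, with the conventions $v_{-n-1}=v_{n+1}=0$. Then $\rho(h)v_j=2j\,v_j$ and, since $w_{k+1}=\rho(f)w_k$, the relation $\rho(f)v_j=v_{j-1}$ holds on the nose. Substituting $k=n-j$ and $\lambda=2n$ into the $\rho(e)w_k$ formula gives $\rho(e)v_j=(n-j)(n+j+1)\,v_{j+1}$, and a direct expansion
\begin{equation}
(n-j)(n+j+1)=n(n+1)-j(j+1)=a_j
\end{equation}
identifies the coefficient with $a_j$, completing the construction of the advertised basis.

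The only step that requires genuine care is the inductive identity $\rho(e)w_k=k(\lambda-k+1)w_{k-1}$; once that is in hand, both the determination of the highest weight and the uniqueness up to isomorphism (all structure constants on $\{v_j\}$ are forced by $\dim V=2n+1$) follow by elementary manipulations.
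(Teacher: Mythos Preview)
Your argument is the standard highest-weight construction and is correct in every detail: the choice of an eigenvector of $\rho(h)$ of maximal real part to force $\rho(e)v=0$, the inductive identity $\rho(e)w_k=k(\lambda-k+1)w_{k-1}$, the termination giving $\lambda=N-1$, the appeal to irreducibility to conclude $N=\dim V=2n+1$, and the reindexing $v_{n-k}=w_k$ that recovers the stated structure constants all go through cleanly.

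As for comparison with the paper: there is nothing to compare. The paper does not prove this lemma; it quotes it as a textbook fact with a citation to Section~6.1 of Kosmann-Schwarzbach's \emph{Groups and Symmetries}. Your write-up is exactly the argument one finds in such references, so in effect you have supplied the proof the paper chose to outsource.
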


Next, let us look at the relation between $\hat{\mathcal{Z}}_\ell$ and the weight operator $\hat{\Omega}_{\ell}$.
The operator $\hat{\mathcal{Z}}_\ell$ is expressed in terms of $\hat{\Omega}_\ell$ and $\hat{\mathcal{N}}_\ell$ as
\begin{align}\label{eq:omega-z}
   \hat{\mathcal{Z}}_{\ell}=2\sum_{k=1}^\ell k\ \hat{b}_k'\hat{b}_k-(\ell+1)\hat{\mathcal{N}}_\ell=2\hat{\Omega}_{\ell}-\hat{\mathcal{N}}_\ell(\ell-\hat{\mathcal{N}}_\ell).
\end{align}
Thus,
\begin{align}
\mathcal{V}_{\ell,m}^r&=\{|\psi\rangle:\hat{\mathcal{N}}_{\ell}\ket{\psi}=m\ket{\psi}\mathrm{and\ }\hat{\Omega}_{\ell}\ket{\psi}=r\ket{\psi}\}\nonumber\\
&=\left\{|\phi\rangle:\hat{\mathcal{N}}_{\ell}\ket{\phi}=m\ket{\phi}\mathrm{and\ }\hat{\mathcal{Z}}_{\ell}\ket{\phi}=\left[2r-m(\ell-m)\right]\ket{\phi}\right\}.
\end{align}
This means that $\hat{\Omega}_{\ell}$ defined as ``weight'' actually corresponds to the weight in the sense of representation theory of Lie algebra.

Now, let us prove Theorem \ref{prosen_conjecture} using Lemmas \ref{lem:1}, \ref{lem:2}, and \ref{lem:3}.

\begin{proof}[Proof of Theorem \ref{prosen_conjecture}]
From Lemma \ref{lem:1}, we have a representation $\rho$ of $\mathfrak{sl}_2$ on $\mathcal{V}_{\ell,m}=\bigoplus_{r=0}^{m(\ell-m)} \mathcal{V}_{\ell,m}^r$ satisfying
    \begin{align}
        \rho(e)={\hat{\mathcal{M}}}_{\ell,m}, \text{ } \rho(f)={\hat{\mathcal{M}}}'_{\ell,m}, \text{ }\rho(h)=\hat{\mathcal{Z}}_{\ell,m}.
    \end{align}
    Then by Lemma \ref{lem:3}, we can decompose $(\rho,\mathcal{V}_{\ell,m})$ into a direct sum of $N$ finite-dimensional irreducible representations $\{(\rho_j,V_j)\}_{j=1}^N$ of $\mathfrak{sl}_2$ for some positive integer $N$. Here we assume that $\dim V_1\geq \dim V_2\geq\cdots\geq \dim V_N>0$. 
     Then, all the eigenvalues of $\hat{\mathcal{Z}}_{\ell,m}$ are even [resp. odd] if $m(\ell-m)$ is even [resp. odd], because $\hat{\Omega}_{\ell,m}=\frac{1}{2}\qty[\hat{\mathcal{Z}}_{\ell,m}+m(\ell-m)]$ has only integer eigenvalues.
    Thus, if $m(\ell-m)$ is even [resp. odd], $\dim V_j$ is odd [resp. even] for all $j=1,2,\ldots, N$.
    In particular, 
    \begin{equation}
    \text{$\dim V_j\geq1$ [resp. $\dim V_j\geq2$] when $m(\ell-m)$ is even [resp. odd].}\label{eq:conddim}
    \end{equation}
    We consider the following decomposition of $\mathcal{V}_{\ell,m}^r$:
    \begin{equation}
        \mathcal{V}_{\ell,m}^r= \bigoplus_{j=1}^{N}(W_j)_{\ell,m}^r \text{ where } (W_j)_{\ell,m}^r=\mathcal{V}_{\ell,m}^r \cap V_j.
    \end{equation}
Then, by Lemma \ref{lem:3},
\begin{equation}
    \dim(W_j)_{\ell,m}^r=
    \begin{cases}
    1& \text{ if } -\dim V_j+1\leq2r-m(\ell-m)\leq\dim V_j-1,\\
    0 & \text{ otherwise},
    \end{cases}
\end{equation}
and we can obtain the kernel and the image of the restriction of $\hat{\mathcal{M}}_{\ell,m}^r$ to subspace $(W_j)_{\ell,m}^r$:
    \begin{align}
    \ker \hat{\mathcal{M}}_{\ell,m}^r|_{(W_j)_{\ell,m}^r}=\ker \rho_j(e)|_{(W_j)_{\ell,m}^r}=
    \begin{cases}
    (W_j)_{\ell,m}^{r}& \!\text{ if } 2r-m(\ell-m)=\dim V_j-1,\\
    \{0\} & \!\text{ otherwise},
    \end{cases}
    \label{eq:ker}
    \end{align}
    and 
    \begin{align}
    \Im \hat{\mathcal{M}}_{\ell,m}^r|_{(W_j)_{\ell,m}^r}=\Im \rho_j(e)|_{(W_j)_{\ell,m}^r}=
    \begin{cases}
    \{0\}& \! \text{ if } 2r-m(\ell-m)=-\dim V_j,\\
    (W_j)_{\ell,m}^{r+1} & \! \text{ otherwise}.
    \end{cases}
    \label{eq:im}
    \end{align}
    If $r\leq\left\lfloor\frac{m(\ell-m)-1}{2}\right\rfloor$, the first case in equation (\ref{eq:ker}) does not occur, because the condition can be expressed as
    \begin{equation}
        \dim V_j=2r-m(\ell-m)+1\leq2\left\lfloor\frac{m(\ell-m)-1}{2}\right\rfloor-m(\ell-m)+1\leq0,
    \end{equation}
    which is incompatible with equation (\ref{eq:conddim}). Thus,
    \begin{align}   
    \ker\hat{\mathcal{M}}_{\ell,m}^r=\bigoplus_{j=1}^N \ker\hat{\mathcal{M}}_{\ell,m}^r|_{(W_j)_{\ell,m}^r}=\bigoplus_{j=1}^N\{0\}=\{0\},
    \end{align}
    and therefore $\hat{\mathcal{M}}_{\ell,m}^r$ is injective.
    If $r\geq\left\lfloor\frac{m(\ell-m)}{2}\right\rfloor$, the first case in equation (\ref{eq:im}) does not occur, because the condition can be expressed as
    \begin{equation}
        \dim V_j=-2r+m(\ell-m)\leq-2\left\lfloor\frac{m(\ell-m)}{2}\right\rfloor+m(\ell-m)=
        \begin{cases}
    0& \text{$m(\ell-m)$: even,}\\
    1 & \text{$m(\ell-m)$: odd.}
    \end{cases}
    \end{equation}
    which is incompatible with equation (\ref{eq:conddim}). Thus,
    \begin{align}   
    \Im\hat{\mathcal{M}}_{\ell,m}^r=\bigoplus_{j=1}^N \Im\hat{\mathcal{M}}_{\ell,m}^r|_{(W_j)_{\ell,m}^r}=\bigoplus_{j=1}^N(W_j)_{\ell,m}^{r+1}=\mathcal{V}_{\ell,m}^{r+1},
    \end{align}
    and therefore ${\hat{\mathcal{M}}}_{\ell,m}^r$ is subjective. 
\end{proof}

We now discuss the relation between $\dim \ker (\hat{\mathcal{M}}_{l,m})^d$ and $\dim\mathcal{V}^r_{\ell,m}$, which is crucial for obtaining the Jordan canonical form of the nilpotent shift operator in the next section. Let us prove the following Theorem \ref{gen_dim} using Lemmas \ref{lem:1}, \ref{lem:2}, and \ref{lem:3}. 
\begin{thm}\label{gen_dim}
    \begin{equation}
        \dim\ker (\hat{\mathcal{M}}_{\ell,m})^d-\dim\ker (\hat{\mathcal{M}}_{\ell,m})^{d-1}=\dim\mathcal{V}_{\ell,m}^{\lfloor \frac{m(\ell-m)-(d-1)}{2}\rfloor}.
        \label{eq:thm2}
    \end{equation}
\end{thm}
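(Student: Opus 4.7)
The plan is to leverage the $\mathfrak{sl}_2$-decomposition from Lemmas \ref{lem:1}--\ref{lem:3} and reduce both sides of equation~(\ref{eq:thm2}) to the same combinatorial count on the irreducible summands. As in the proof of Theorem \ref{prosen_conjecture}, I would decompose $\mathcal{V}_{\ell,m}=\bigoplus_{j=1}^N V_j$ into irreducible $\mathfrak{sl}_2$-representations and set $d_j:=\dim V_j$. Lemma \ref{lem:3} supplies a weight basis of each $V_j$ on which $\hat{\mathcal{M}}_{\ell,m}$ acts as a raising operator $v_k\mapsto a_k v_{k+1}$ with $a_k=(n_j-k)(n_j+k+1)$, $n_j=(d_j-1)/2$, which vanishes only at the top weight $k=n_j$. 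After rescaling the basis, this makes $\hat{\mathcal{M}}_{\ell,m}|_{V_j}$ a single nilpotent Jordan block of size $d_j$, so $\dim\ker(\hat{\mathcal{M}}_{\ell,m}|_{V_j})^d=\min(d,d_j)$. Summing over $j$ and telescoping yields
\begin{equation}
\dim\ker(\hat{\mathcal{M}}_{\ell,m})^d-\dim\ker(\hat{\mathcal{M}}_{\ell,m})^{d-1}=\#\{j:d_j\geq d\}.
\end{equation}

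For the right-hand side of equation~(\ref{eq:thm2}), equation~(\ref{eq:omega-z}) identifies $\mathcal{V}_{\ell,m}^r$ with the $\hat{\mathcal{Z}}_{\ell,m}$-weight space of weight $w:=2r-m(\ell-m)$. Lemma \ref{lem:3} tells us that weight $w$ occurs in $V_j$ with multiplicity exactly one precisely when $|w|\leq d_j-1$, so
\begin{equation}
\dim\mathcal{V}_{\ell,m}^r=\#\{j:d_j\geq |2r-m(\ell-m)|+1\}.
\end{equation}
Substituting $r=\lfloor(m(\ell-m)-(d-1))/2\rfloor$, a direct computation gives $|2r-m(\ell-m)|+1\in\{d,d+1\}$, with the value determined by the parity of $m(\ell-m)-(d-1)$.

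The main, and essentially only, obstacle is to handle the case where the substitution yields the bound $d_j\geq d+1$ rather than the desired $d_j\geq d$. This is resolved by the parity constraint~(\ref{eq:conddim}) established in the proof of Theorem \ref{prosen_conjecture}: every $d_j$ is odd when $m(\ell-m)$ is even and even otherwise. A short case split on the parities of $m(\ell-m)$ and $d$ shows that the bound $d+1$ arises exactly when $d$ has the opposite parity to every $d_j$, so $\{j:d_j\geq d+1\}=\{j:d_j\geq d\}$ in that case. In every scenario, both sides of equation~(\ref{eq:thm2}) therefore coincide with $\#\{j:d_j\geq d\}$, completing the proof.
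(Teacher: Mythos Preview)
Your proof is correct and follows essentially the same approach as the paper's: both reduce the statement to the identity $\#\{j:d_j\geq d\}=N_d$ by combining the Jordan-block structure of $\rho_j(e)$ from Lemma \ref{lem:3} with the weight-space count, and both resolve the floor via the parity constraint~(\ref{eq:conddim}). The only cosmetic difference is that the paper organizes the parity case split around $N_d$ versus $N_{d+1}$ (equations~(\ref{eq:ndodd})--(\ref{eq:ndeven})) before computing the kernel difference, whereas you compute the kernel difference first and then match it to the weight-space dimension.
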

\begin{proof}[Proof of Theorem \ref{gen_dim}]
Here we use the notation used in the proof of Theorem \ref{prosen_conjecture}.
    Let $N_d$ denote the number of irreducible representations whose dimensions are larger than or equal to $d$.  In the following, we first concentrate on the case $m(\ell-m)$ even.
    In this case, the eigenvalues of $\rho_j(h)$ are even and $\dim V_j$ is odd for all $j=1,2,\ldots, N$.  When $d$ is odd, from Lemma \ref{lem:3}, we have
    \begin{align}
        \dim V_j \geq d \iff \pm(d-1) \text{ are eigenvalues of }\rho_j(h).
        \label{eq:dim_eig}
    \end{align}    
    This means that $N_d$ is the dimension of the eigenspace of $\rho(h)$ with eigenvalue $-(d-1)$. Thus,
    \begin{align}
        N_d&=\dim \ker \qty[\bigoplus_{j=1}^N\rho_j(h)+(d-1)]\nonumber\\
        &=\dim \ker \qty[\hat{\mathcal{Z}}_{\ell,m}+(d-1)]\nonumber\\
        & = \dim  \ker \qty[2\hat{\Omega}_{\ell,m}-m(\ell-m)+(d-1)] \nonumber\\
        & =\dim\mathcal{V}_{\ell,m}^{ \frac{m(\ell-m)-(d-1)}{2}}.
        \label{eq:ndodd}
    \end{align}
    When $d$ is even, since $\dim V_j$ is odd for all $j$,
    \begin{align}
    N_d=N_{d+1}=\dim\mathcal{V}_{\ell,m}^{ \frac{m(\ell-m)-d}{2}}.
    \label{eq:ndeven}
    \end{align}
    Similarly, when $m(\ell-m)$ is odd and $d$ is even [resp. odd], we have equation~\eqref{eq:ndodd} [resp. equation~\eqref{eq:ndeven}].

    Then, from Lemma \ref{lem:3},
    \begin{align}
        \dim\ker(\rho_j(e))^d = 
        \begin{cases}
        d & \qq{if}\dim V_j\geq d \\
        \dim V_j & \qq{if}\dim V_j\le d
        \end{cases}
    \end{align}
    for $d\in\mathbb{Z}_{\geq1}$, and therefore
    \begin{align}
        & \dim \ker ({\hat{\mathcal{M}}}_{\ell,m})^d- \dim \ker({\hat{\mathcal{M}}}_{\ell,m})^{d-1} \nonumber\\
        = &\dim \ker (\rho(e))^d-\dim \ker (\rho(e))^{d-1}\nonumber\\
        = &\sum_{j=1}^N\qty[\dim \ker (\rho_j(e))^d-\dim \ker (\rho_j(e))^{d-1}]\nonumber\\
        = &\sum_{j:\dim V_j\geq d}\qty[d-(d-1)]+\sum_{j:\dim V_j< d}\qty[\dim V_j-\dim V_j]\nonumber\\
        = &N_d
        \label{eq:nd}
    \end{align}
    for $d\in\mathbb{Z}_{\geq1}$. 
    Finally, since the relation between $N_d$ and $\mathcal{V}_{\ell,m}^r$ is given by \eqref{eq:ndodd} and \eqref{eq:ndeven}, we obtain equation~\eqref{eq:thm2}.
    \end{proof}

\begin{rem}
Physically speaking, the dimension of the proper eigenspace of ${\hat {\cal M}}_{\ell, m}$ is the number of irreducible representations appearing in the decomposition of the Hilbert space of $m$ spin-$\frac{\ell-1}{2}$ fermions. 
Let us explain this by an example. Consider the case with $\ell=4$, $m=2$. 
The operator $\hat{\mathcal{Z}}_4/2$ can be expressed as
\begin{align}
\frac{\hat{\mathcal{Z}}_4}{2}=\displaystyle\sum_{k=1}^4 \left(k-\frac{5}{2}\right)\hat{b}_k'\hat{b}_k
=-\frac{3}{2}\hat{b}_1'\hat{b}_1-\frac{1}{2}\hat{b}_2'\hat{b}_2+\frac{1}{2}\hat{b}_3'\hat{b}_3+\frac{3}{2}\hat{b}_4'\hat{b}_4.
\end{align}
Focusing on the $m=2$ subspace, we find that the possible eigenvalues of $\hat{\mathcal{Z}}_4/2$ are $2$, $1$, $0$ (two-fold degeneracy), $-1$, $-2$. 
This corresponds to the fact that the Hilbert space of two spin-$\frac{3}{2}$ fermions can be decomposed into the singlet and quintet subspaces. 
More complicated examples can be found in Ref.~\cite{buettner1967}.
\end{rem}

\section{Procedure to obtain the Jordan canonical form of $\hat{\mathcal{M}}_\ell$}
\label{sec:q-binomial}
We have seen that $\dim\ker (\hat{\mathcal{M}}_{\ell,m})^d$ is expressed in terms of $\dim \mathcal {V}_{\ell,m}^r$.
The question that arises is whether $\dim \mathcal {V}_{\ell,m}^r$ has an explicit expression.
We can regard $\dim \mathcal {V}_{\ell,m}^r$ as the total number of strings that can be realized by repeating the operation of moving $1$ to the right $r$ times, starting from a string of $m$ $1$'s and $\ell-m$ $0$'s from the left.
In combinatorics, it is known that these numbers are related to the polynomial called $q$-binomial coefficients defined as follows \cite{stanley2013}.

\begin{dfn}\label{q-bin-def}
($q$-binomial coefficient)
Let $n$ and $m$ be positive integers. Let us define the following polynomials of a symbol $q$.
\begin{enumerate}
\item A $q$-bracket is defined by $[n]_q:=\dfrac{1-q^n}{1-q}$.
\item A $q$-factorial is defined by $[n]_q!:=\prod_{k=1}^{n}[k]_q$.
\item A $q$-binomial coefficient is defined by $\mqty[n\\m]_q:=\dfrac{[n]_q!}{[m]_q![n-m]_q!}$.
\end{enumerate}
\end{dfn}

The following theorem relates the dimensions of the $\mathcal{V}_{\ell,m}^r$ and $q$-binomial coefficients.
This result is known in algebraic combinatorics \cite{stanley2013}, but for the reader's convenience, we give a simple proof in Appendix.
\begin{thm}\label{q-bin-state}
\begin{align}
\label{q-bin-statement}
    \mqty[\ell\\m]_q=\sum_{r=0}^{m(\ell-m)}q^r \dim \mathcal{V}_{\ell,m}^r.
\end{align}
\end{thm}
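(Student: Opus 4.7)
The plan is to identify $\dim\mathcal{V}_{\ell,m}^r$ with a combinatorial counting quantity and then show that its generating function in $q$ satisfies the same recursion and initial conditions as $\mqty[\ell\\m]_q$.

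First I would translate the representation-theoretic data into pure combinatorics. A Fock basis vector $\ket{\nu_1,\ldots,\nu_\ell}$ lies in $\mathcal{V}_{\ell,m}^r$ exactly when $\sum_k \nu_k = m$ and $\sum_k k\,\nu_k = r + \tfrac{1}{2}m(m+1)$. Hence $\dim\mathcal{V}_{\ell,m}^r$ equals the number of binary $\ell$-strings with $m$ ones whose position sum exceeds the minimum value $\tfrac{1}{2}m(m+1)$ by exactly $r$. The target identity then becomes a statement about the generating function of such strings.

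Next I would condition on the last bit $\nu_\ell$. Strings with $\nu_\ell=0$ biject with length-$(\ell-1)$ strings having $m$ ones and the same weight. Strings with $\nu_\ell=1$ biject with length-$(\ell-1)$ strings having $m-1$ ones, with a weight shift that works out to $\ell - \tfrac12 m(m+1) + \tfrac12 (m-1)m = \ell - m$ (we subtract $\ell$ for removing the $1$ at position $\ell$, and the reference minimum changes from $\tfrac12 m(m+1)$ to $\tfrac12(m-1)m$). This yields the recursion
\begin{align}
\sum_r q^r \dim \mathcal{V}_{\ell,m}^r = \sum_r q^r \dim \mathcal{V}_{\ell-1,m}^r + q^{\ell-m}\sum_r q^r \dim \mathcal{V}_{\ell-1,m-1}^r.
\end{align}

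In parallel I would verify the corresponding $q$-Pascal identity
\begin{align}
\mqty[\ell\\m]_q = \mqty[\ell-1\\m]_q + q^{\ell-m}\mqty[\ell-1\\m-1]_q
\end{align}
directly from Definition \ref{q-bin-def}, using $1-q^\ell = (1-q^{\ell-m}) + q^{\ell-m}(1-q^m)$ after clearing the common denominator $[m]_q!\,[\ell-m]_q!$. I would then finish by induction on $\ell$: the base cases $m=0$ and $m=\ell$ are immediate since both sides equal $1$, and the two polynomials, satisfying the same recursion with the same boundary data, must coincide for all $(\ell,m)$. The only bookkeeping is the arithmetic of the weight shift and the $q$-Pascal identity; both are short computations, so there is no serious obstacle beyond keeping the indices straight.
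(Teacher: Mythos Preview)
Your proof is correct and follows essentially the same strategy as the paper's: both show that the generating function of $\dim\mathcal{V}_{\ell,m}^r$ and the $q$-binomial coefficient satisfy the same $q$-Pascal recursion and boundary conditions. The only cosmetic difference is that you condition on the last bit $\nu_\ell$ and obtain the identity $\mqty[\ell\\m]_q=\mqty[\ell-1\\m]_q+q^{\ell-m}\mqty[\ell-1\\m-1]_q$, whereas the paper conditions on the first bit $\nu_1$ and uses the mirror identity $\mqty[\ell\\m]_q=q^{m}\mqty[\ell-1\\m]_q+\mqty[\ell-1\\m-1]_q$.
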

Combining Theorems \ref{gen_dim} and \ref{q-bin-state}, we arrive at the following: 
\begin{cor}
The difference $\dim\ker (\hat{\mathcal{M}}_{\ell,m})^d-\dim\ker (\hat{\mathcal{M}}_{\ell,m})^{d-1}$ is the coefficient of $q^{\lfloor \frac{m(\ell-m)-(d-1)}{2}\rfloor}$ in $\mqty[\ell\\m]_q$.
\end{cor}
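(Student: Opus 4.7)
The plan is to read off the corollary as a direct composition of the two preceding results: Theorem \ref{gen_dim} identifies the consecutive-kernel difference on the left-hand side with a dimension of a weight subspace, and Theorem \ref{q-bin-state} identifies that dimension with a specific coefficient of the $q$-binomial coefficient. So the proof is essentially a substitution, and I do not expect any genuine obstacle.

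Concretely, first I would invoke Theorem \ref{gen_dim} to write
\begin{equation}
\dim\ker (\hat{\mathcal{M}}_{\ell,m})^d-\dim\ker (\hat{\mathcal{M}}_{\ell,m})^{d-1}=\dim\mathcal{V}_{\ell,m}^{r_\star},\qquad r_\star:=\left\lfloor \tfrac{m(\ell-m)-(d-1)}{2}\right\rfloor.
\end{equation}
Next, I would apply Theorem \ref{q-bin-state}, which expands $\mqty[\ell\\m]_q$ as $\sum_{r=0}^{m(\ell-m)} q^{r}\dim\mathcal{V}_{\ell,m}^{r}$, and read off the coefficient of $q^{r_\star}$ as $\dim\mathcal{V}_{\ell,m}^{r_\star}$. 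Chaining the two identities yields the claim.

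The only thing to double-check is that the index $r_\star$ is sensible. For $d\ge 1$ we have $r_\star\le\lfloor m(\ell-m)/2\rfloor\le m(\ell-m)$, so $r_\star$ lies in the summation range of Theorem \ref{q-bin-state} whenever $r_\star\ge 0$. If $d$ is so large that $r_\star<0$, then $\mathcal{V}_{\ell,m}^{r_\star}=\{0\}$ by definition and the coefficient of $q^{r_\star}$ in $\mqty[\ell\\m]_q$ is zero as well, so both sides of the asserted identity vanish; moreover, Theorem \ref{gen_dim} forces $(\hat{\mathcal{M}}_{\ell,m})^{d}$ and $(\hat{\mathcal{M}}_{\ell,m})^{d-1}$ to have coinciding kernels in that regime, which is consistent. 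Hence the corollary holds for all $d\ge 1$, and no further argument is required beyond concatenating the two theorems.
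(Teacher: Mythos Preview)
Your proposal is correct and matches the paper's approach exactly: the paper presents this corollary immediately after Theorem~\ref{q-bin-state} with the one-line justification ``Combining Theorems \ref{gen_dim} and \ref{q-bin-state},'' which is precisely the substitution you spell out. Your extra check on the range of $r_\star$ is not in the paper but does no harm.
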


This means that $\dim\ker (\hat{\mathcal{M}}_{\ell,m})^d-\dim\ker (\hat{\mathcal{M}}_{\ell,m})^{d-1}$ is the $d$-th largest coefficient of $\mqty[\ell\\m]_q$ (counted with multiplicity). 
Let us give an example for $\ell=6$ and $m=3$.
In this case, the $q$-binomial coefficient reads
\begin{align}
    \mqty[6\\3]_q=1 + q + 2 q^2 + 3 q^3 + 3 q^4 + 3 q^5 + 3 q^6 + 2 q^7 + q^8 + q^9.
\end{align}
From this polynomial, we can obtain $\dim\ker (\hat{\mathcal{M}}_{6,3})^d-\dim\ker (\hat{\mathcal{M}}_{6,3})^{d-1}$ as shown in Table 1.
From the table, it is clear that $\hat{\mathcal{M}}_{6,3}$ is decomposed into three Jordan blocks whose sizes are 10, 6 and 4.
\begin{table}[H]
\centering
  \caption{\small Table of $\dim\ker (\hat{\mathcal{M}}_{6,3})^d-\dim\ker (\hat{\mathcal{M}}_{6,3})^{d-1}$.}
\begin{tabular}{c|cccccccccc}
\hline
    $d$ & 1&2&3&4&5&6&7&8&9&10 \\ \hline
    $\dim\ker (\hat{\mathcal{M}}_{6,3})^d-\dim\ker (\hat{\mathcal{M}}_{6,3})^{d-1}$& 3&3&3&3&2&2&1&1&1&1 \\\hline
 \end{tabular}
\end{table}

Similarly, one can determine the number and sizes of the Jordan blocks of $\hat{\mathcal{M}}_{\ell,m}$ for general $\ell$ and $m$. The procedure can be summarized in the following theorem:

\begin{cor}
    Jordan blocks of $\hat{\mathcal{M}}_{\ell,m}$ of size $d$ $(1\leq d\leq m(\ell-m)+1)$ appear only if $ m(\ell-m)-d $ is odd, and in this case,
    the number of them is the difference between the coefficients of $q^{\frac{m(\ell-m)-d+1}{2}}$ and $q^{\frac{m(\ell-m)-d-1}{2}}$ in $\mqty[\ell\\m]_q$. 
\end{cor}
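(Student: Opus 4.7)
The plan is to deduce this Corollary directly from the preceding one, combined with the standard fact that the Jordan block structure of a nilpotent endomorphism is completely determined by the sequence $d\mapsto\dim\ker N^d$. Specifically, if $n_k$ denotes the number of Jordan blocks of size exactly $k$, then $\dim\ker N^d=\sum_{k\geq 1}\min(k,d)\,n_k$, so taking a first difference gives $\dim\ker N^d-\dim\ker N^{d-1}=\sum_{k\geq d}n_k$ (the number of Jordan blocks of size at least $d$), and a second difference isolates
\begin{align*}
n_d \;=\; \bigl[\dim\ker N^d-\dim\ker N^{d-1}\bigr]-\bigl[\dim\ker N^{d+1}-\dim\ker N^d\bigr].
\end{align*}

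Next I would apply this identity with $N=\hat{\mathcal{M}}_{\ell,m}$. Write $M:=m(\ell-m)$ and let $c_r$ denote the coefficient of $q^r$ in $\mqty[\ell\\m]_q$. The preceding Corollary already identifies each first difference with a single coefficient of the $q$-binomial, so substitution yields
\begin{align*}
n_d \;=\; c_{\lfloor (M-d+1)/2\rfloor}\;-\;c_{\lfloor (M-d)/2\rfloor},
\end{align*}
with the convention $c_{-1}:=0$ to handle the extreme case $d=M+1$ cleanly.

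The remaining task is a short parity case analysis. When $M-d$ is even, both floors collapse to $(M-d)/2$, the two coefficients agree, and $n_d=0$; this proves the first clause of the Corollary (no Jordan blocks of such a size). When $M-d$ is odd, the floors evaluate to $(M-d+1)/2$ and $(M-d-1)/2$, reproducing precisely the stated difference of coefficients. A quick sanity check at $d=M+1$ returns $c_0-c_{-1}=1$, consistent with the unique maximal-size block containing the rank-$(M+1)$ generalized eigenstate $\ket{\min}$ identified in the introduction. Since all conceptual heavy lifting (the $\mathfrak{sl}_2$-triple of Lemma \ref{lem:1}, the block counting of Theorem \ref{gen_dim}, and the $q$-binomial identity of Theorem \ref{q-bin-state}) is already in place, I foresee no genuine obstacle; the only care required is in the floor-function arithmetic of the final step and the bookkeeping convention $c_{-1}=0$ at the boundary.
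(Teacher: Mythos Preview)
Your proposal is correct and follows essentially the same route as the paper: both compute the number of size-$d$ Jordan blocks as the second difference $N_d-N_{d+1}$ of the kernel dimensions, plug in the floor expressions from Theorem~\ref{gen_dim}/the preceding Corollary, and then do the same parity case split on $m(\ell-m)-d$. The only cosmetic difference is that you cite the immediately preceding Corollary (already phrased in terms of $q$-binomial coefficients), whereas the paper first invokes Theorem~\ref{gen_dim} for $\dim\mathcal{V}_{\ell,m}^r$ and then appeals to Theorem~\ref{q-bin-state}.
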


\begin{proof}
    The number of Jordan blocks of $\hat{\mathcal{M}}_{\ell,m}$ of size $d$ is $N_d-N_{d+1}$. Due to Theorem \ref{gen_dim},
    \begin{align}
        N_d-N_{d+1}&=\dim\mathcal{V}_{\ell,m}^{\lfloor \frac{m(\ell-m)-(d-1)}{2}\rfloor}-\dim\mathcal{V}_{\ell,m}^{\lfloor \frac{m(\ell-m)-d}{2}\rfloor}\nonumber\\
        &=
        \begin{cases}
\dim\mathcal{V}_{\ell,m}^{ \frac{m(\ell-m)-d+1}{2}}-\dim\mathcal{V}_{\ell,m}^{ \frac{m(\ell-m)-d-1}{2}} & m(\ell-m)-d:{\rm odd}, \\
0 & m(\ell-m)-d:{\rm even}.
\end{cases}
    \end{align}
Combining this result with Theorem \ref{q-bin-state}, we obtain the desired result. 
\end{proof}

\section*{Acknowledgments}
We thank Toma$\check{\rm z}$ Prosen for encouraging us to publish this work. 
S.K. was supported by JSPS KAKENHI Grant-in-Aid for JSPS fellows Grant No. JP23KJ0738, the Forefront Physics and Mathematics Program to Drive Transformation, the University of Tokyo. 
H.Y. was supported by JSPS KAKENHI Grant-in-Aid for JSPS fellows Grant No. JP22J20888, the Forefront Physics and Mathematics Program to Drive Transformation, and JSR Fellowship, the University of Tokyo.
H.K. was supported by JSPS KAKENHI Grants No. JP18K03445, No. JP23H01093, No. 23H01086, and MEXT KAKENHI Grant-in-Aid for Transformative Research Areas A “Extreme Universe” (KAKENHI Grant No. JP21H05191).

\section*{Appendix: Proof of Theorem \ref{q-bin-state}}
In this appendix, we give a proof of Theorem \ref{q-bin-state}, although an equivalent statement is proved in \cite{stanley2013}.
A straightforward calculation shows that the $q$-binomial coefficients defined in Definition \ref{q-bin-def} satisfy the following recurrence formula:
\begin{align}
\label{recurence}
    \mqty[\ell\\m]_q=q^m\mqty[\ell-1\\m]_q+\mqty[\ell-1\\m-1]_q,
\end{align}
with the initial conditions $\mqty[\ell\\0]_q=\mqty[\ell\\l]_q=1$ for $\ell\geq0$.
Let us prove Theorem \ref{q-bin-state} using this recurrence formula.

\begin{proof}[Proof of Theorem \ref{q-bin-state}]
Let $B_{\ell,m}(q)$ denote the right-hand side of equation~(\ref{q-bin-statement}).
We shall show that $B_{\ell,m}(q)$ and $\mqty[\ell\\m]_q$ satisfy the same recurrence formula and the initial conditions.
Let $\mathcal{U}^r_{\ell,m}$ [resp. $\mathcal{W}^r_{\ell,m}$] be the subspace of $\mathcal{V}^r_{\ell,m}$ spanned by Fock basis vectors of the form $\ket{0,*,*,*,......,*}$ [resp. $\ket{1,*,*,*,......,*}$].
Since $\mathcal{V}_{\ell,m}^r$ can be decomposed into a direct sum of $\mathcal{U}^r_{\ell,m}$ and $\mathcal{W}^r_{\ell,m}$, we see that
    \begin{align}
    \label{q-binomial-dim}
        \dim\mathcal{V}_{\ell,m}^r=\dim(\mathcal{U}_{\ell,m}^r \oplus \mathcal{W}_{\ell,m}^r)=\dim \mathcal{U}_{\ell,m}^r+ \dim \mathcal{W}_{\ell,m}^r
        =\dim\mathcal{V}_{\ell-1,m}^{r-m}+\dim\mathcal{V}_{\ell-1,m-1}^{r}.
    \end{align}
Therefore, we have
\begin{align}
    q^mB_{\ell-1,m}(q)+B_{\ell-1,m-1}(q)&=\sum_{r=0}^{m(\ell-m-1)}q^{r+m} \dim \mathcal{V}_{\ell-1,m}^r+\sum_{r=0}^{(m-1)(\ell-m)}q^{r} \dim \mathcal{V}_{\ell-1,m-1}^r\nonumber\\
    &=\sum_{r'=m}^{m(\ell-m)}q^{r'} \dim \mathcal{V}_{\ell-1,m}^{r'-m}+\sum_{r=0}^{(m-1)(\ell-m)}q^{r} \dim \mathcal{V}_{\ell-1,m-1}^r\nonumber\\
    &=\sum_{r=0}^{m(\ell-m)}q^{r} (\dim \mathcal{V}_{\ell-1,m}^{r-m}+\dim \mathcal{V}_{\ell-1,m-1}^{r})\nonumber\\
    &=\sum_{r=0}^{m(\ell-m)}q^r \dim \mathcal{V}_{\ell,m}^r=B_{\ell,m}(q),
\end{align}
where we used $\dim\mathcal{V}_{\ell,m}^r=0$ if $r<0$, $r>m(\ell-m)$.
This is the same recurrence relation as equation~(\ref{recurence}).
In addition, since $B_{\ell,0}(q)=B_{\ell,\ell}(q)=1$, $B_{\ell,m}(q)$ satisfies the same initial conditions as the $q$-binomial coefficient $\mqty[\ell\\m]_q$. Thus, $B_{\ell,m}(q)=\mqty[\ell\\m]_q$.
\end{proof}

\section*{References}

\bibliography{reference}
\bibliographystyle{iopart-num}

\end{document}